\frenchspacing \setlength{\pdfpagewidth}{8.5in}
\newtheorem{theorem}{Theorem} 
\newtheorem{lemma}{Lemma} 
\newtheorem{definition}{Definition} \newtheorem{observation}{Observation}
\begin{document} 
\title{Towards Asymptotically Optimal One-to-One PDP Algorithms for Capacity 2+ Vehicles}
\author{Martin Olsen\\BTECH, Aarhus University\\Denmark\\martino@btech.au.dk\\ }

\maketitle

\begin{abstract} 
We consider the one-to-one Pickup and Delivery Problem (PDP) in
Euclidean Space with arbitrary dimension $d$ where $n$
transportation requests are picked i.i.d. with a separate
origin-destination pair for each object to be moved. First, we
consider the problem from the customer perspective where the
objective is to compute a plan for transporting the objects such
that the Euclidean distance traveled by the vehicles {\em when
carrying objects} is minimized. We develop a polynomial time
asymptotically optimal algorithm for vehicles with capacity
$o(\sqrt[2d]{n})$ for this case. This result also holds imposing
LIFO constraints for loading and unloading objects. Secondly, we
extend our algorithm to the classical single-vehicle PDP where the
objective is to minimize the total distance traveled by the vehicle
and present results indicating that the extended algorithm is
asymptotically optimal for a fixed vehicle capacity if the origins
and destinations are picked i.i.d. using the same distribution.
\end{abstract}

\section{Introduction}

The challenge of computing optimal or near-optimal plans for
transporting goods or people is a core challenge within logistics.
This problem has received a huge amount of attention from the
operations research community. A generic term for this class of
problems is {\em Vehicle Routing Problems}. Vehicle routing problems
come in many flavors depending among other things on the properties
of the vehicles used, the characteristics of the terrain and the
type of transportation requests considered.

In this paper, we consider the variant of the problem where the
terrain is the Euclidean space of an arbitrary dimension $d$ and
where we measure the distance using the Euclidean distance. We have
one vehicle with limited capacity at our disposal but the vehicle
can carry more than one object at a time. Every object to be
transported has a separate origin and destination. The objective is
to compute a plan for transporting the objects with a minimum
distance traveled by the vehicle. We look at the nonpreemptive
version meaning that an object has to stay on the vehicle until it
is delivered. Our version is static (offline) in the sense that all
information on the transportation requests is available to us before
we have to compute the optimal route for the vehicle.

The origins and destinations are picked using a stochastic process
and we measure the performance of an algorithm by considering the
{\em approximation ratio}, i.e. the value of the solution computed
by the algorithm divided by the value of the optimal solution. The
main aim of the paper is to present polynomial time algorithms that
are {\em asymptotically optimal} in the sense that the approximation
ratio converges to $1$ with probability $1$ (almost surely) as the
number of transportation requests goes to infinity. To the best of
our knowledge, we are the first to present asymptotically optimal
algorithms for the realistic setting where the vehicles have a
limited capacity greater than one. Our algorithms
are easy to implement and they may be useful in practice.

\subsection{Related Work}

In the literature, the problem considered is often referred to as
the One-to-One Pickup and Delivery Problem (PDP) or the Vehicle
Routing Problem with Paired Pickups and Deliveries. The single
vehicle case that we look at in this paper is also known as the
Traveling Salesman Problem with Pickups and Deliveries because this
case can be viewed as a Traveling Salesman Problem (TSP) with
precedence constraints where the origin of an object must be visited
before the corresponding destination. If the single vehicle has
capacity $1$ then this problem is known as the Stacker Crane Problem
(SCP). We refer the reader to the excellent surveys
~\cite{Berbeglia2007,Parragh2008_1,Parragh2008_2,Savelsbergh95} for
an overview on vehicle routing research.

The PDP problem in focus in this paper is defined as follows:
\begin{definition} 
An instance of the {\em PDP} problem for dimension $d$ and capacity
$c$ is a set of $n$ requests $R = \{r_1, r_2, \ldots , r_n\}$ with
$r_i \in [0,1]^d \times [0,1]^d$ for $i \in \{1, 2, \ldots , n\}$. A
request $r = (s,t)$ corresponds to a transportation job for an
object with origin $s$ and destination $t$. The solution is a plan
for transporting all objects from their origin to their destination
minimizing the total Euclidean distance traveled using a single
vehicle with capacity $c$. The vehicle is initially at
$(0,0,\ldots,0)$.
\end{definition}

As noted by many authors, the PDP problem generalizes the classical
TSP problem and is thus NP complete for $d \geq 2$.
Guan~\cite{GUAN1998} has shown that the PDP problem is NP complete
for $d=1$ and $c \geq 2$ and Guan also shows how to solve this
version in linear time if we allow temporarily dropping objects.
Treleaven et al.~\cite{Treleaven2013} present asymptotically optimal
algorithms for $c=1$ (SCP) and $d \geq 2$ where the origins are
picked i.i.d. and the destinations are picked i.i.d. from separate
distributions.

Stein~\cite{Stein1978} also conducts a probabilistic analysis but he
looks at the variant where $d=2$ and $c = +\infty$ and where $n$
origins and $n$ destinations are picked independently using a
uniform distribution on some planar region. Stein shows that the
value of the optimal solution divided by $\sqrt{n}$ converges almost
surely to a constant times the square root of the area of the
region. Stein also shows how to solve the problem he considers by
concatenating two TSP tours on the origins and destinations
respectively. This way of solving the problem yields a solution
which is roughly $6\%$ higher than the optimal solution (in the
limit).

Psaraftis~\cite{Psaraftis1981} has developed an $O(n^2)$ heuristic
guaranteeing an approximation ratio of at most $4$ for any instance
for the case $d=2$ and $c=+\infty$. Haimovich and
Kan~\cite{Haimovich_Kan_1985} have constructed asymptotically
optimal PDP algorithms for the case $d=2$ where all transportation
requests have the same depot as destination. Haimovich and
Kan~\cite{Haimovich_Kan_1985} also presented a PTAS for this case
for $c = O(\log \log n)$ and this result has later been extended to
cover cases with multiple depots and arbitrary values of
$d$~\cite{Khachay2016} or larger values of $c$~\cite{Das15}.

In some cases, the customers only pay for a kilometer driven by the
vehicle if the vehicle carries an object when driving that
kilometer. As an example, this is the case when the vehicles are
taxis. This leads us to the PDP-C problem that does not seem to have
received much attention:
\begin{definition} 
The {\em PDP-C} problem is identical to the PDP problem with the
exception that distance traveled {\em carrying no objects} is
excluded.
\end{definition}

The PDP-C problem covers any situation where
carrying objects is very expensive compared to carrying no objects.
Possible areas for application are the development of taxi sharing
or ride sharing schemes. The PDP-C problem also comes into play when
we want to minimize the time spent for an elevator (or robot arm)
that moves slowly carrying passengers (objects) but moves very fast
carrying no passengers (objects).

An efficient and near-optimal
subroutine solving the PDP-C problem might also be useful in the
case where we have multiple vehicles at our disposal and the
objective is to minimize the completion time (the time when the last
object has been delivered). A first step to solve this problem could
involve partitioning objects into groups sharing a vehicle using a
PDP-C subroutine.

\subsection{Contribution and Outline}

In Section~\ref{sec:pdp-c}, we present an adaptive asymptotically optimal
polynomial time algorithm for the PDP-C problem for $c =
o(\sqrt[2d]{n})$ for any dimension $d$ under the assumption that the
transportation requests are picked independently and by identical
distributions (i.i.d.). As explained above, there are many real
world problems where a PDP-C algorithm is useful.

A PDP algorithm is presented in Section~\ref{sec:pdp} accompanied by
what we consider to be good reasons to believe that this algorithm
is asymptotically optimal for a constant capacity $c$ if the origins
and destinations for the transportation requests are picked i.i.d.
using the same distribution. The PDP algorithm is an extension of
our PDP-C algorithm.

The key idea for our approach is that we solve a TSP with precedence
constraints in $d$-dimensional Euclidean space -- the PDP problem --
by solving a classical TSP defined by the requests in Euclidean
space with the double dimension $2d$. Neighbors in the solution to
the classical TSP correspond to objects that have similar requests
so we let such neighbors share a vehicle.

As mentioned earlier, we believe that we are the first to present
results on asymptotic optimality for the realistic case $1 < c <
+\infty$. Our results also hold if we allow temporarily dropping
objects (the preemptive variant) or if we impose LIFO constraints
for loading and unloading objects.


\section{An Asymptotically Optimal PDP-C Algorithm}\label{sec:pdp-c}

In this section, we present our PDP-C algorithm. We begin by listing the pseudocode consisting of $4$ steps. We also exemplify how the steps work using the instance shown in Fig.~\ref{fig:instance} that has $d=1$ and $c=2$.

\subsection{The Pseudocode}
\begin{figure}
  \centering 
  \includegraphics[scale=0.9]{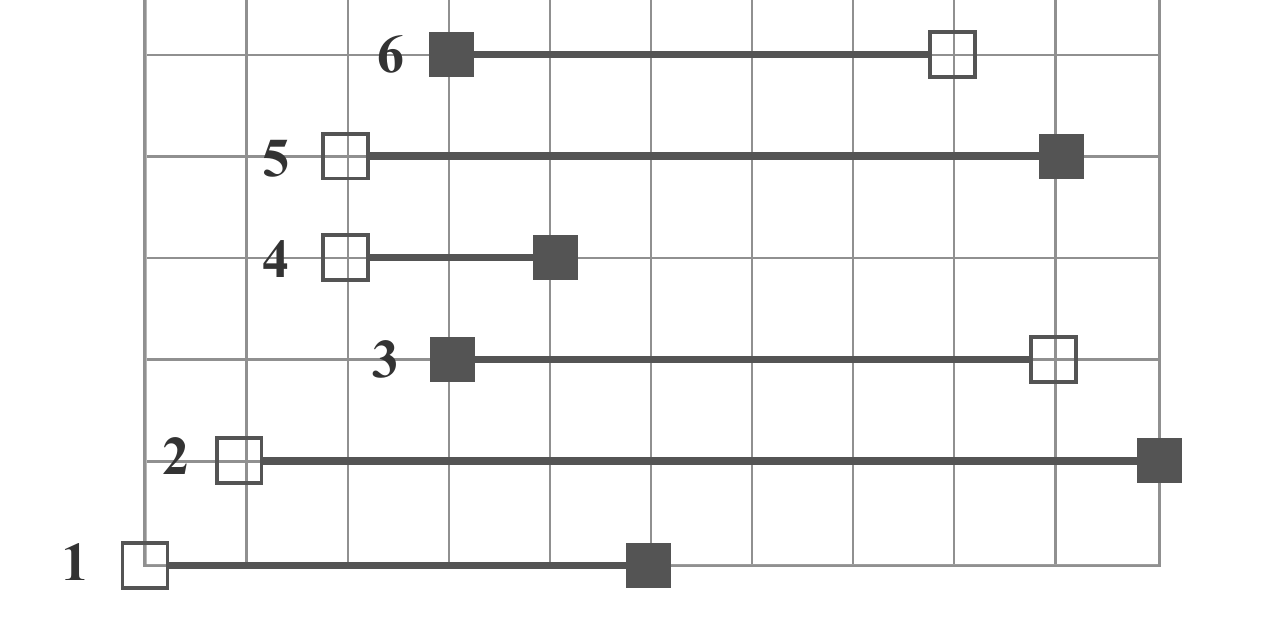}
  \caption{An instance of a  $1$-dimensional PDP-C problem. The white and black squares correspond to pickups and deliveries respectively. As an example, request $4$ shows that an object has to be picked up at $0.2$ and delivered at $0.4$. We assume that vehicles with capacity $2$ are used.}
  \label{fig:instance}
\end{figure}

Throughout the paper, we assume that $n = mc$ for some integer $m$. Otherwise, we can serve the extra objects one by one implying an extra $O(c)$ cost that does not affect our results on asymptotic optimality. 

\subsubsection*{Step $1$}
Use a polynomial time constant factor approximation algorithm~\cite{Arora1998,christofides76} to compute a feasible solution $T$ for the $2d$-dimensional
Euclidean TSP problem defined by $R$ ($\sigma$ is a permutation on $\{1, 2, \ldots, n\}$): 
$$T = r_{\sigma(1)} \rightarrow r_{\sigma(2)} \rightarrow \ldots \rightarrow r_{\sigma(n)} \rightarrow r_{\sigma(1)} \enspace .$$ The tour $T$ is shown in Fig.~\ref{fig:T}.
\begin{figure}
  \centering 
  \includegraphics[scale=0.9]{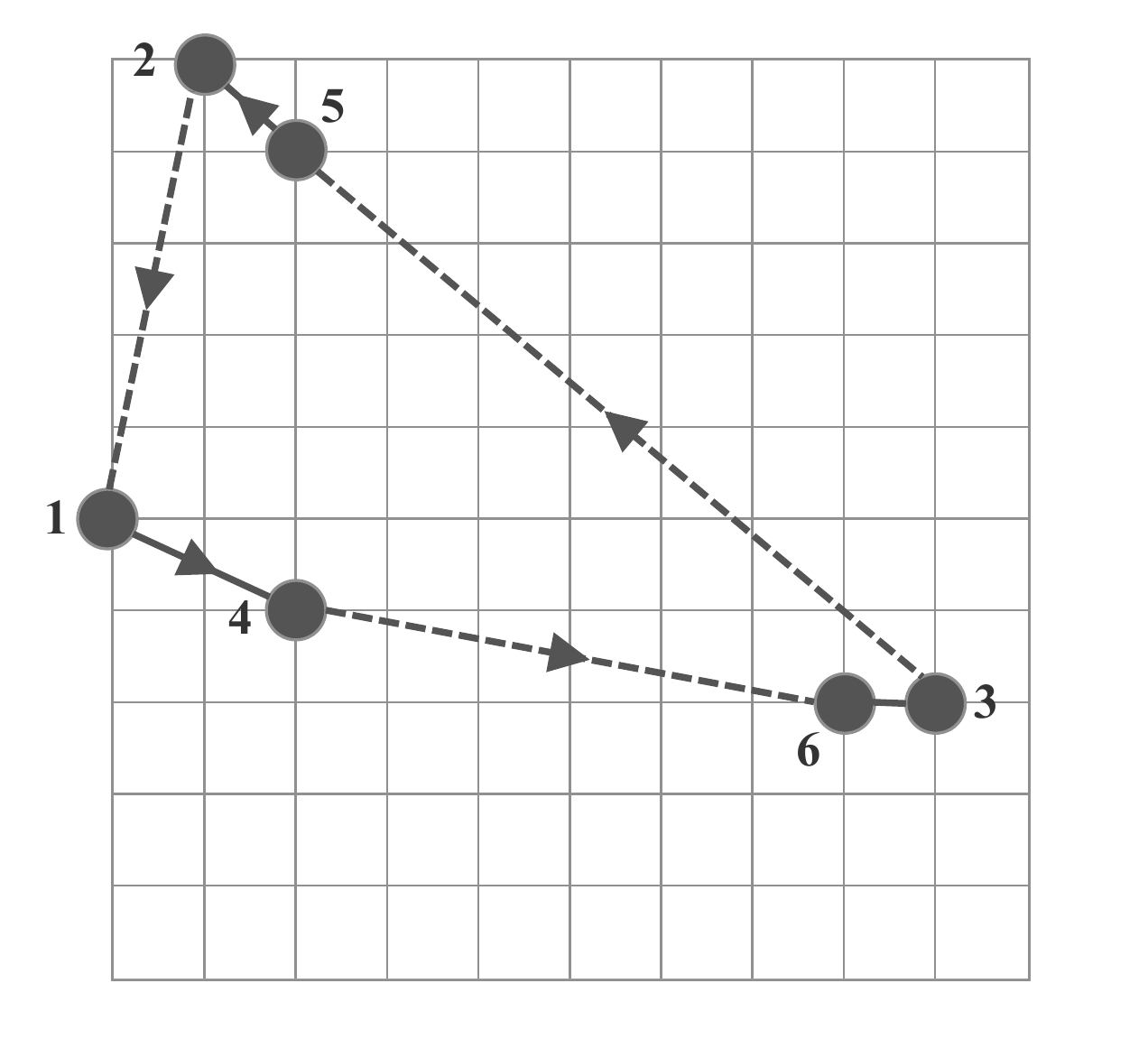}
  \caption{The objects that are to be transported live in $1$-dimensional Euclidean space. The requests from Fig.~\ref{fig:instance} are members of the $2$-dimensional Euclidean space. As an example, request $4$ is the point $(0.2,0.4) \in [0,1]^2$. The PDP-C algorithm attacks a $1$-dimensional PDP-C problem by solving a classical $2$-dimensional TSP defined by the requests. The two ways to split $T$ into groups are indicated by the dashed and solid arrows respectively.}
  \label{fig:T}
\end{figure}

\subsubsection*{Step $2$}
We now split $T$ into $m$ groups such that each group contains $c$
consecutive requests $r_{\sigma(i)}, r_{\sigma(i+1)}, \ldots ,
r_{\sigma(i+c-1)}$. One possible way of partitioning the requests into groups is
as follows:
$$\{r_{\sigma(1)}, r_{\sigma(2)}, \ldots, r_{\sigma(c)}\},$$
$$\{r_{\sigma(c+1)}, r_{\sigma(c+2)}, \ldots, r_{\sigma(2c)}\},$$
$$ \ldots $$
$$\{r_{\sigma(n-c+1)}, r_{\sigma(n-c+2)}, \ldots, r_{\sigma(n)}\}\enspace .$$
There are $c$ ways to do the split up (See Fig.~\ref{fig:T}). For
each of the possibilities, we repeat Step $3$ and obtain $c$ candidate solutions for the PDP-C problem:

\subsubsection*{Step $3$ (repeated for each possible splitting of $T$)}
The objects in a group share a vehicle. The objects for a group of requests $\{r_{\sigma(i)}$, $r_{\sigma(i+1)}$, $\ldots$ , $r_{\sigma(i+c-1)}\}$ are picked up in the order $\sigma(i)$, $\sigma(i+1)$, \ldots , $\sigma(i+c-1)$ and dropped off in reverse order (LIFO). The plan corresponding to one of the ways to split $T$ into groups is shown in Fig.~\ref{fig:cabsharing}.

\subsubsection*{Step $4$}
Finally, we pick the best of the $c$ candidate solutions produced in Step $3$. The plan computed by the algorithm is shown in Fig.~\ref{fig:cabsharing}.

\begin{figure}
  \centering 
  \includegraphics[scale=0.9]{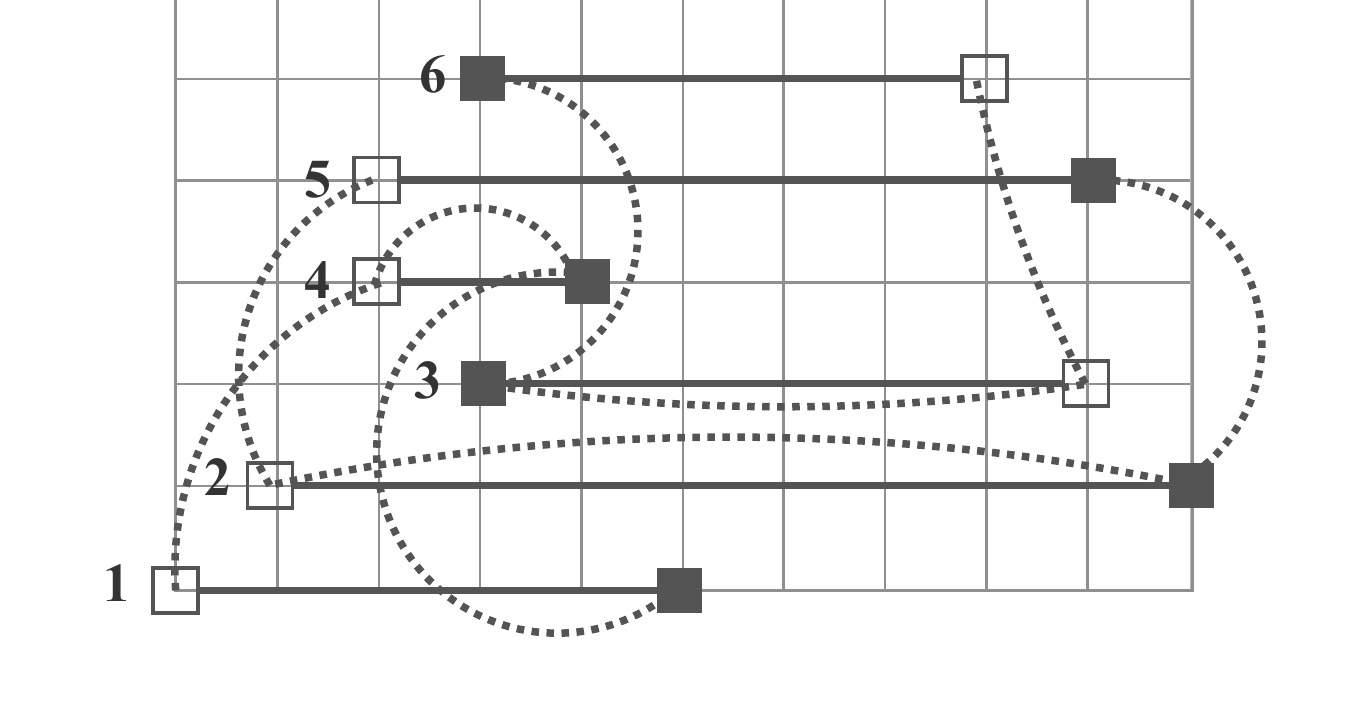}
  \caption{The feasible solution for the PDP-C instance computed by the PDP-C algorithm. The solution is based on the splitting of $T$ indicated by solid arrows in Fig.~\ref{fig:T}. The objects are picked up and dropped in LIFO order.}
  \label{fig:cabsharing}
\end{figure}

\subsection{Analysis of the PDP-C Algorithm}

We let $SOL$ denote the value of the plan computed by the PDP-C
algorithm. We kindly remind the reader that the value is the
Euclidean distance traveled where we only measure the distance
traveled carrying objects. The value of the optimal solution is
denoted by $OPT$. The length of the tour $T$ is $\|T\|_{2d}$ where
the subscript indicates the dimension of the underlying Euclidean
space.

We now present a key lemma that links the TSP in $2d$-space to the
value of the solution computed by the PDP-C algorithm:
\begin{lemma}
\label{lemma:cab}
\begin{equation}
\label{eq:lemma:cab}SOL \leq \sum_{i=1}^n \frac{\|s_i - t_i\|_d}{c} + \sqrt{2}\left(\frac{c-1}{c}\right) \|T\|_{2d} \enspace .
\end{equation}
\end{lemma}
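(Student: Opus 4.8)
The plan is to analyze the loaded cost of a single candidate solution group by group, express it in terms of edges of $T$, and then average over the $c$ candidate solutions, exploiting that Step~4 returns the cheapest one.

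First I would write down the exact distance traveled \emph{while carrying objects} for one group $\{r_{j_1},\ldots,r_{j_c}\}$, where $j_k=\sigma(i+k-1)$, under the LIFO scheme. The loaded portion of the trajectory is $s_{j_1}\to s_{j_2}\to\cdots\to s_{j_c}\to t_{j_c}\to t_{j_{c-1}}\to\cdots\to t_{j_1}$, so the group cost equals
$$\sum_{k=1}^{c-1}\|s_{j_k}-s_{j_{k+1}}\|_d \;+\; \|s_{j_c}-t_{j_c}\|_d \;+\; \sum_{k=1}^{c-1}\|t_{j_{k+1}}-t_{j_k}\|_d \enspace .$$
The only segment that directly links an origin to its own destination is $s_{j_c}\to t_{j_c}$, belonging to the last-picked object. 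I would then pair the $k$-th consecutive-pickup term with the matching consecutive-delivery term and apply $a+b\le\sqrt{2}\sqrt{a^2+b^2}$ to obtain
$$\|s_{j_k}-s_{j_{k+1}}\|_d+\|t_{j_k}-t_{j_{k+1}}\|_d \;\le\; \sqrt{2}\,\|r_{j_k}-r_{j_{k+1}}\|_{2d} \enspace ,$$
i.e.\ $\sqrt{2}$ times the length of the corresponding edge of $T$. Hence the group cost is at most $\|s_{j_c}-t_{j_c}\|_d+\sqrt{2}\sum_{k=1}^{c-1}\|r_{j_k}-r_{j_{k+1}}\|_{2d}$: one direct-delivery term plus $\sqrt{2}$ times the intra-group edges of $T$.

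The crux is a counting argument over the $c$ splittings. Summing the candidate values over all $c$ offsets, I would observe two facts. First, each edge of $T$ is a between-group edge in exactly one splitting and therefore internal to a group in the other $c-1$ splittings, so the intra-group edges contribute $\sqrt{2}(c-1)\|T\|_{2d}$ in total. Second, each request is the last-picked object (the turning point $s_{j_c}\to t_{j_c}$) of its group in exactly one of the $c$ splittings, so the direct-delivery terms contribute $\sum_{i=1}^n\|s_i-t_i\|_d$ in total. Dividing the summed candidate values by $c$ yields the average candidate value $\tfrac{1}{c}\sum_{i=1}^n\|s_i-t_i\|_d+\sqrt{2}\tfrac{c-1}{c}\|T\|_{2d}$, and since $SOL$ is the minimum over the $c$ candidates it is at most this average, which is precisely the right-hand side of (\ref{eq:lemma:cab}).

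The main obstacle I anticipate is the bookkeeping in the counting step, in particular verifying carefully that, as the offset ranges over its $c$ values, every tour edge plays the role of a between-group edge exactly once and every request serves as a turning point exactly once (handling the wrap-around edge of the tour correctly). Once that is pinned down, the remaining steps are a direct per-group computation and the elementary $\min\le\text{average}$ inequality.
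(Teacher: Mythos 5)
Your proof is correct and follows essentially the same route as the paper's: decompose the loaded cost of each candidate into consecutive-pickup, full-vehicle, and consecutive-delivery segments, use the counting facts that each tour edge is intra-group in exactly $c-1$ of the $c$ splittings and each request is the turning point in exactly one, apply $a+b\le\sqrt{2}\sqrt{a^2+b^2}$ to convert paired origin/destination edges into $2d$-dimensional tour edges, and finish with $\min\le\text{average}$. The only difference is cosmetic (you bound per group before summing, the paper sums first and then bounds $P+D$), so there is nothing substantive to add.
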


\begin{proof}
Let $SOL_i$, $i \in \{1, 2, \ldots , c\}$, denote the Euclidean distance covered for the $i$'th plan computed by Step $3$. 
The sum $\sum_{i=1}^c SOL_i$ can be broken down into three terms: \begin{equation}\sum_{i=1}^c SOL_i = P+F+D \enspace \end{equation} where $P$ is a term for picking up objects, $F$ is a term for driving with a full vehicle, and $D$ is a term for dropping off objects.
Every object tries to be the final object to be picked up in precisely one of the plans:
\begin{equation}F = \sum_{i=1}^n \|s_i - t_i\|_d\enspace .\end{equation}
The segments $s_{\sigma(i)} \rightarrow s_{\sigma(i+1)}$ and $t_{\sigma(i+1)} \rightarrow t_{\sigma(i)}$ are traversed in exactly $c-1$ of the plans, $i \in \{1, 2, \ldots, n\}$ (Addition is performed cyclically: $n+1 = 1$): 
\begin{equation}P+D = (c-1) \sum_{i=1}^n \left[\|s_{\sigma(i)} - s_{\sigma(i+1)}\|_d +  \|t_{\sigma(i)} - t_{\sigma(i+1)}\|_d \right] \enspace .\end{equation}
The key to connecting the $d$-dimensional space to the $2d$-dimensional space is the following simple observation that follows from the elementary identity $\sqrt{x}+\sqrt{y} \leq \sqrt{2}\sqrt{x+y}$:
\begin{equation}\|s_{\sigma(i)} - s_{\sigma(i+1)}\|_d +  \|t_{\sigma(i)} - t_{\sigma(i+1)}\|_d \leq \sqrt{2}\|r_{\sigma(i)} - r_{\sigma(i+1)}\|_{2d} \enspace .\end{equation} 
We now combine the two preceding facts:
\begin{equation}P+D \leq \sqrt{2}(c-1) \|T\|_{2d}\enspace .\end{equation}
The Lemma now follows from
\begin{equation}SOL \leq \frac{1}{c}\sum_{i=1}^c SOL_i \enspace .\end{equation}
\end{proof}

We are now ready to prove that our PDP-C algorithm is asymptotically optimal:
\begin{theorem}
\label{thm:d+}
Let an infinite sequence of requests $(s_i,t_i)$ be
picked i.i.d. in $[0,1]^d \times [0,1]^d$ using a distribution satisfying that $E[\|s_i
- t_i\|_d] = \mu > 0$. Let $SOL_n$ denote the
value of the plan computed by the PDP-C algorithm and let
$OPT_n$ denote the value of the optimal plan for the first $n$ requests. If $c =
o(\sqrt[2d]{n})$ then we have the following:
\begin{equation} \lim_{n \rightarrow +\infty} \frac{SOL_n}{OPT_n} = 1 \mbox{ a.s.} \end{equation}
\end{theorem}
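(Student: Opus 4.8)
The plan is to sandwich the ratio $SOL_n/OPT_n$ between $1$ and $1+\varepsilon_n$ with $\varepsilon_n \to 0$ almost surely, using Lemma~\ref{lemma:cab} for the numerator and an elementary capacity-based lower bound for the denominator. Since the algorithm returns a feasible solution, $SOL_n \geq OPT_n$ always holds, which immediately gives the easy direction $SOL_n/OPT_n \geq 1$, so all the work goes into the matching upper bound.

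First I would establish the lower bound $OPT_n \geq \frac{1}{c}\sum_{i=1}^n \|s_i - t_i\|_d$. In any feasible plan, let $D_i$ denote the distance the vehicle travels while object $i$ is on board. The vehicle's net displacement over exactly those segments is $t_i - s_i$, so the triangle inequality yields $D_i \geq \|s_i - t_i\|_d$. Writing $\sum_i D_i$ as the integral of the on-board object count along the loaded portion of the trajectory, and noting this count never exceeds $c$, I get $\sum_{i=1}^n \|s_i - t_i\|_d \leq \sum_{i=1}^n D_i \leq c\cdot OPT_n$, which is the claimed bound (and which matches, up to the factor $1/c$, the first term of Lemma~\ref{lemma:cab}).

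Combining the two bounds, the displacement terms coincide and I obtain
$$1 \leq \frac{SOL_n}{OPT_n} \leq 1 + \frac{\sqrt{2}\,(c-1)\,\|T\|_{2d}}{\sum_{i=1}^n \|s_i - t_i\|_d}\enspace.$$
It then remains to show the error term vanishes almost surely. Because $\|s_i-t_i\|_d \leq \sqrt{d}$ is bounded, the strong law of large numbers applies and gives $\frac{1}{n}\sum_{i=1}^n \|s_i - t_i\|_d \to \mu > 0$ a.s., so the denominator grows like $(\mu+o(1))\,n$. For the numerator I would invoke the classical deterministic fact that \emph{any} set of $n$ points in $[0,1]^{2d}$ admits a TSP tour of length $O(n^{1-1/(2d)})$ (via a grid partition or space-filling curve); since $T$ comes from a constant-factor approximation, $\|T\|_{2d} = O(n^{1-1/(2d)})$ holds for every instance and hence a.s. Substituting, the error term is $O\!\left(\frac{c\,n^{1-1/(2d)}}{n}\right) = O\!\left(c\,n^{-1/(2d)}\right)$, which tends to $0$ exactly under the hypothesis $c = o(\sqrt[2d]{n})$.

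The main obstacle is pinning down the growth rate of $\|T\|_{2d}$ and recognizing that only an \emph{upper} bound is needed: a deterministic $O(n^{1-1/(2d)})$ estimate suffices, so no absolute-continuity assumption on the request distribution (as the full Beardwood–Halton–Hammersley limit would require) is necessary. The appearance of the double dimension $2d$ in the exponent — inherited from embedding the $d$-dimensional PDP-C instance as a TSP in $[0,1]^{2d}$ — is precisely what determines the admissible capacity regime $c = o(\sqrt[2d]{n})$.
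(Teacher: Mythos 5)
Your proposal is correct and follows essentially the same route as the paper's proof: the capacity-based lower bound $OPT_n \geq \frac{1}{c}\sum_i \|s_i - t_i\|_d$ (which you justify more explicitly than the paper's ``bill-sharing'' remark, via the triangle inequality and the integral of the on-board count), combined with Lemma~\ref{lemma:cab}, the deterministic $O(n^{1-1/(2d)})$ bound on $\|T\|_{2d}$, and the strong law of large numbers for the denominator. Your observation that only a worst-case upper bound on the tour length is needed (not the Beardwood--Halton--Hammersley limit) matches the paper's use of Few's bound exactly.
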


\begin{proof}
The objects could share the bill of traveling by equally sharing the cost for each segment. This sharing scheme leads to the following lower bound on $OPT_n$:
\begin{equation}OPT_n \geq \sum_{i=1}^n \frac{\|s_i - t_i\|_d}{c}\end{equation}
We now combine the lower bound on $OPT_n$ with Lemma~\ref{lemma:cab}:
\begin{equation}\label{eq:z} \frac{SOL_n}{OPT_n} \leq 1 + \sqrt{2}(c-1) \left(\sum_{i=1}^n \|s_i - t_i\|_d\right)^{-1} \|T\|_{2d} \enspace .\end{equation}
The inequality (\ref{eq:z}) is rewritten slightly:
\begin{equation}\label{eq:appx}\frac{SOL_n}{OPT_n} \leq 1 + \sqrt{2}(c-1) \left(\frac{\sum_{i=1}^n \|s_i - t_i\|_d}{n}\right)^{-1} \frac{\|T\|_{2d}}{n} \enspace .\end{equation}
We use a constant factor approximation algorithm for solving the TSP in $2d$-dimensional space implying this upper bound on the length of $T$~\cite{few1955}:
\begin{equation}\|T\|_{2d} = O(n^{\frac{2d-1}{2d}}) \enspace .\end{equation}
Using $c = o(\sqrt[2d]{n})$, we now get the following:
\begin{equation}\label{eq:ct}c\|T\|_{2d}= o(n) \enspace .\end{equation}
According to the Strong Law of Large Numbers, we have the following:
\begin{equation}\label{eq:stronglaw} \lim_{n \rightarrow +\infty} \left(\frac{\sum_{i=1}^n \|s_i - t_i\|_d}{n}\right)^{-1} = \mu^{-1}\mbox{ a.s.} \end{equation}
The lemma now follows from (\ref{eq:appx}), (\ref{eq:ct}) and (\ref{eq:stronglaw}). 
\end{proof}

A few comments on the convergence rate might be suitable at this
point. According to~\cite{BeardwoodHaltonHammersley1959}, the limit
of $\frac{\|T\|_{2d}}{n} \sqrt[2d]{n}$ is almost surely a constant
where the constant depends on the distribution of the requests with
maximum value for the uniform distribution. In other words, the
algorithm is {\em adaptive} in the sense that the right hand side of
(\ref{eq:appx}) tends to be smaller for big instances for the
non-uniform case.

Even for relatively small values of $n$, we might experience a right
hand side of (\ref{eq:appx}) that is relatively close to $1$. As an
example, we consider the case $d=1$ where we admittedly have the
best conditions for convergence. Few~\cite{few1955} has shown that
$\|T\|_{2} \leq \sqrt{2n}+\frac{7}{4}$ implying that the right hand
side of (\ref{eq:appx}) converges relatively quickly to $1$ for
moderate $c$  for the case $d=1$ if $\mu$ is not too small.

\section{A PDP Algorithm}\label{sec:pdp}

We now turn our attention to the PDP problem and present a
polynomial time algorithm that can be viewed as a generalization of
the {\em Iterated Tour Partition Heuristic} that Haimovich and
Kan~\cite{Haimovich_Kan_1985} presented for the case where $d=2$ and
all the destinations are identical. The tour that we consider is a
tour in $2d$-dimensional Euclidean space where the requests are
members and we allow different destinations for arbitrary $d$.

Our PDP algorithm is an extension of our PDP-C algorithm: First, we
figure out what objects should share the vehicle and establish a LIFO order
for pickups and deliveries (PDP-C with $1 < c < +\infty$). Secondly,
we set up a route for the vehicle focusing on the segments when it
carries no objects (PDP with $c = 1$. SCP in other words). We
exemplify our PDP algorithm by adding two more figures to the
PDP-C example. The pseudocode for the PDP algorithm consists of the
following $6$ steps:

\subsubsection*{Steps $1-4$}
Use the steps from the PDP-C Algorithm and compute a PDP-C solution. 

\subsubsection*{Step $5$}
Use an algorithm from the SPLICE class~\cite{Treleaven2013} to compute a feasible solution $S$ for the SCP instance defined by a pickup at the origin and a delivery at the destination for every object that was the first to be picked up by a vehicle (and consequently the last object to be dropped off) in the PDP-C solution. Let $S_0$ denote the set of segments that go \underline{from} a delivery \underline{to} a pickup from the solution $S$ to the Stacker Crane Problem. See Fig.~\ref{fig:SCP}.
\begin{figure}
  \centering 
  \includegraphics[scale=0.9]{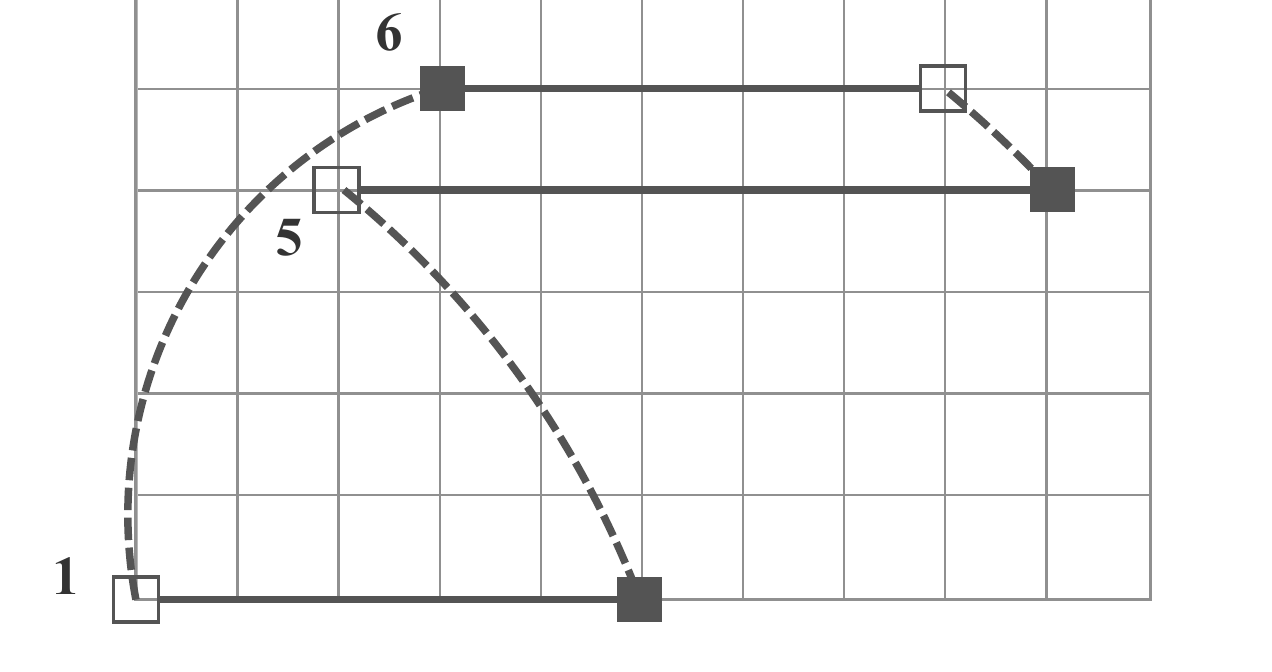}
  \caption{The SCP and the solution $S$ computed in Step $5$. The dashed segments are the segments in $S_0$. The objects for the requests $1$, $5$ and $6$ where the first to be picked up in the PDP-C solution in Fig.~\ref{fig:cabsharing}.}
  \label{fig:SCP}
\end{figure}

\subsubsection*{Step $6$}
A PDP solution can now be produced by combining the PDP-C solution with the segments $S_0$ from the Stacker Crane Plan where no objects where carried. See Fig.~\ref{fig:pdp}.\\
\begin{figure}
  \centering 
  \includegraphics[scale=0.9]{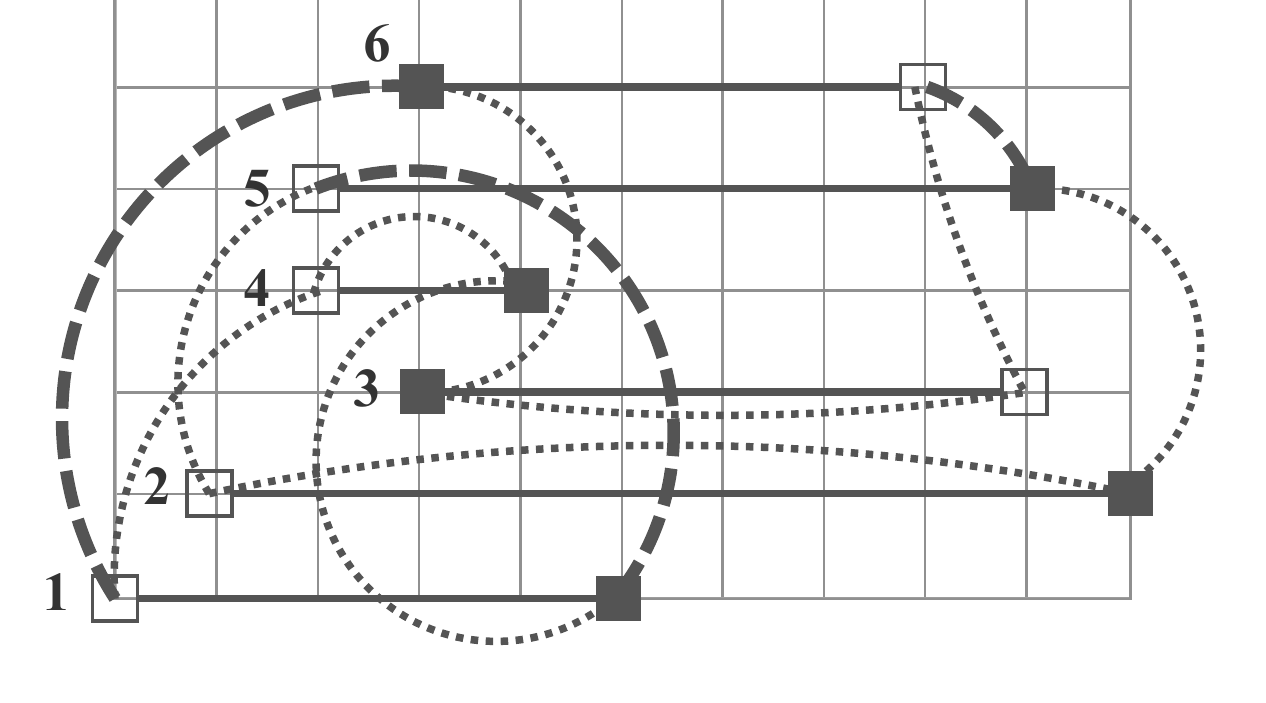}
  \caption{The solution for the PDP problem computed by the PDP algorithm.}
  \label{fig:pdp}
\end{figure}

We now let $SOL$ denote the total Euclidean distance covered by the vehicle for the plan proposed by the PDP algorithm. The optimal solution is denoted by $OPT$. The total length of the delivery-to-pickup segments from $S$ that we use in Step $6$ is $\|S_0\|_d$ where $d$ refers to the dimension of the Euclidean space.

\begin{lemma}
\label{lemma:darp}
\begin{equation}SOL \leq \sum_{i=1}^n \frac{\|s_i - t_i\|_d}{c} + \sqrt{2}\left(\frac{c-1}{c}\right) \|T\|_{2d} + \|S_0\|_d \enspace .\end{equation}
\end{lemma}
\begin{proof}
Compared to Lemma~\ref{lemma:cab} the extra distance driven is $\|S_0\|_d$.
\end{proof}

\begin{observation}
\label{observation:d++}
If the following conditions are met
\begin{equation}
\label{eq:S0}
c\|S_0\|_d = o(n) \enspace ,
\end{equation}
\begin{equation}
\label{eq:T}
c \|T\|_{2d} =  o(n) \enspace , \mbox{ and }
\end{equation}
\begin{equation}
\label{eq:mu}
\sum_{i=1}^n \|s_i - t_i\|_d = \Omega(n) \enspace ,
\end{equation} then the PDP Algorithm is asymptotically optimal: $\lim_{n \rightarrow +\infty} \frac{SOL_n}{OPT_n} = 1$.
\end{observation}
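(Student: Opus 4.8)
The plan is to follow the proof of Theorem~\ref{thm:d+} almost verbatim, substituting Lemma~\ref{lemma:darp} for Lemma~\ref{lemma:cab} and carrying the additional $\|S_0\|_d$ term through the estimate. Since the three displayed hypotheses are assumed outright, the argument is purely deterministic and requires no appeal to the Strong Law of Large Numbers.

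First I would re-establish the lower bound on $OPT_n$. The cost-sharing argument in the proof of Theorem~\ref{thm:d+} shows that the distance any vehicle must travel \emph{while carrying objects} is at least $\sum_{i=1}^n \|s_i - t_i\|_d / c$. Because the PDP objective counts the \emph{total} distance traveled --- which is at least the distance traveled while carrying objects --- the same bound transfers directly to the PDP problem, giving $OPT_n \geq \sum_{i=1}^n \|s_i - t_i\|_d / c$.

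Next I would divide Lemma~\ref{lemma:darp} by this lower bound and clear the common factor $1/c$, obtaining
\begin{equation}
\frac{SOL_n}{OPT_n} \leq 1 + \frac{\sqrt{2}(c-1)\|T\|_{2d} + c\|S_0\|_d}{\sum_{i=1}^n \|s_i - t_i\|_d} \enspace .
\end{equation}
Using $c-1 \leq c$ together with (\ref{eq:T}) and (\ref{eq:S0}), the numerator is $o(n)$, while (\ref{eq:mu}) makes the denominator $\Omega(n)$; hence the fraction tends to $0$. Finally, since the plan produced by the PDP algorithm is feasible we have the trivial bound $SOL_n \geq OPT_n$, so $\frac{SOL_n}{OPT_n} \geq 1$. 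The two estimates together squeeze the ratio to $1$ as $n \to \infty$.

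The computation is entirely routine; the only point deserving a sentence of justification is the transfer of the lower bound from the carrying distance to the total PDP distance, which is the single conceptual link between the PDP-C analysis and the PDP objective. I expect no genuine obstacle beyond stating this observation clearly.
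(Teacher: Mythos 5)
Your proposal is correct and is exactly the argument the paper leaves implicit: the Observation is stated without proof, and the intended justification is precisely to divide Lemma~\ref{lemma:darp} by the lower bound $OPT_n \geq \frac{1}{c}\sum_{i=1}^n \|s_i - t_i\|_d$ (which transfers from the carrying-only objective to the total-distance objective, as you note) and let the hypotheses kill the error term deterministically. The algebra checks out, and your remark that no appeal to the Strong Law of Large Numbers is needed here is accurate.
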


It follows from~\cite{Treleaven2013} that $\|S_0\|_d$ is $o(n)$
almost surely\footnote{The $o(n)$ result follows from (2) and the
unnumbered equation in the proof of Theorem 4.5 in
\cite{Treleaven2013}.} for $d \geq 3$ if $S$ is the Stacker Crane
tour computed by a SPLICE algorithm on $n$ requests in $[0,1]^d
\times [0,1]^d$ with the origins and destinations picked i.i.d.
using the same distribution. The Stacker Crane tour $S$ from our PDP
algorithm consists of $n/c$ requests but the corresponding points
are not picked independently. Informally speaking, these $n/c$
requests seem to be spread evenly on $R$ so we are optimistic with
respect to proving that (\ref{eq:S0}) holds but more work has to be
done to look into the details and conditions for convergence.
Observation~\ref{observation:d++} gives us reason to believe that
our PDP algorithm is asymptotically optimal for fixed capacity $c$
if the origins and destinations are picked i.i.d. from the same
distribution since (\ref{eq:T}) and (\ref{eq:mu}) hold in this case
and (\ref{eq:S0}) seems plausible.

\subsection*{Conclusion}

We have presented a polynomial time asymptotically optimal algorithm
for the PDP-C problem and a polynomial time algorithm for the PDP
problem that we have good reasons to believe is asymptotically
optimal as well (under certain assumptions for picking the
transportation requests). Our results are dealing with vehicles with
limited capacity greater than one. One obvious idea for future work
is incorporating time windows by extending the dimension of the
request space.

\bibliographystyle{plain}
\bibliography{Olsen}

\begin{thebibliography}{10}

\bibitem{Arora1998}
S.~Arora.
\newblock Polynomial time approximation schemes for euclidean traveling
  salesman and other geometric problems.
\newblock {\em J. ACM}, 45(5):753--782, September 1998.

\bibitem{BeardwoodHaltonHammersley1959}
J.~Beardwood, J.~H. Halton, and J.~M. Hammersley.
\newblock The shortest path through many points.
\newblock {\em Mathematical Proceedings of the Cambridge Philosophical
  Society}, 55(4):299–327, 1959.

\bibitem{Berbeglia2007}
G.~Berbeglia, J.~F. Cordeau, I.~Gribkovskaia, and G.~Laporte.
\newblock Static pickup and delivery problems: a classification scheme and
  survey.
\newblock {\em TOP: An Official Journal of the Spanish Society of Statistics
  and Operations Research}, 15(1):1--31, 2007.

\bibitem{christofides76}
N.~Christofides.
\newblock {Worst-case analysis of a new heuristic for the travelling salesman
  problem}.
\newblock Technical Report 388, Graduate School of Industrial Administration,
  Carnegie Mellon University, 1976.

\bibitem{Das15}
A.~Das and C~Mathieu.
\newblock A quasipolynomial time approximation scheme for euclidean capacitated
  vehicle routing.
\newblock {\em Algorithmica}, 73(1):115--142, 2015.

\bibitem{few1955}
L.~Few.
\newblock The shortest path and the shortest road through n points.
\newblock {\em Mathematika}, 2(2):141--144, 12 1955.

\bibitem{GUAN1998}
D.J. Guan.
\newblock Routing a vehicle of capacity greater than one.
\newblock {\em Discrete Applied Mathematics}, 81(1):41 -- 57, 1998.

\bibitem{Haimovich_Kan_1985}
M.~Haimovich and A.~H. G.~Rinnooy Kan.
\newblock Bounds and heuristics for capacitated routing problems.
\newblock {\em Mathematics of Operations Research}, 10(4):527--542, 1985.

\bibitem{Khachay2016}
M.~Khachay and R.~Dubinin.
\newblock {P}{T}{A}{S} for the euclidean capacitated vehicle routing problem in
  ${R}^d$.
\newblock In {\em DOOR 2016}, volume 9869 of {\em Lecture Notes in Computer
  Science}, pages 193--205. Springer, 2016.

\bibitem{Parragh2008_1}
S.~N. Parragh, K.~F. Doerner, and R.~F. Hartl.
\newblock A survey on pickup and delivery problems (part {I}).
\newblock {\em Journal f\"{u}r Betriebswirtschaft}, 58(1):21--51, 2008.

\bibitem{Parragh2008_2}
S.~N. Parragh, K.~F. Doerner, and R.~F. Hartl.
\newblock A survey on pickup and delivery problems (part {I}{I}).
\newblock {\em Journal f\"{u}r Betriebswirtschaft}, 58(2):81--117, 2008.

\bibitem{Psaraftis1981}
{H. N.} Psaraftis.
\newblock Analysis of an o(n²) heuristic for the single vehicle many-to-many
  euclidean dial-a-ride problem.
\newblock {\em Transportation Research. Part B: Methodological}, pages
  133--145, 1981.

\bibitem{Savelsbergh95}
M.~W.~P. Savelsbergh and M.~Sol.
\newblock The general pickup and delivery problem.
\newblock {\em Transportation Science}, 29:17--29, 1995.

\bibitem{Stein1978}
D.~M. Stein.
\newblock An asymptotic, probabilistic analysis of a routing problem.
\newblock {\em Mathematics of Operations Research}, 3(2):89--101, 1978.

\bibitem{Treleaven2013}
K.~Treleaven, M.~Pavone, and E.~Frazzoli.
\newblock Asymptotically optimal algorithms for one-to-one pickup and delivery
  problems with applications to transportation systems.
\newblock {\em IEEE Transactions on Automatic Control}, 58(9):2261--2276, 2013.

\end{thebibliography}

\end{document}